\begin{document}
\title{GraphMaps: Browsing Large Graphs\\ as Interactive Maps}
\author{Lev Nachmanson\inst{1} \and Roman Prutkin\inst{2}  \and Bongshin Lee\inst{1} \and
Nathalie Henry Riche\inst{1} \and Alexander E.~Holroyd\inst{1} \and Xiaoji Chen\inst{3}}
\institute{Microsoft Research, Redmond, WA, USA \\
  \texttt{\{levnach,bongshin,nath,holroyd\}@microsoft.com}
\and
Karlsruhe Institute of Technology, Germany \\
\texttt{roman.prutkin@kit.edu}
\and 
Microsoft, Redmond, WA, USA \\
  \texttt{missx@xbox.com}
}
  
\newcommand{\N}{\mathbb{N}_0}
\newcommand{\rephrase}[3]{\noindent\textbf{#1~#2.}~\emph{#3}}

\newcommand{\lev}[1]{\todo[color=yellow!40]{LN: #1}}
\newcommand{\romanp}[1]{\todo[color=green!40]{RP: #1}}

\maketitle
\begin{abstract}
Algorithms for laying out large graphs have seen significant progress in
the past decade. However, browsing large graphs remains a
challenge. Rendering thousands of graphical elements at once often
results in a cluttered image, and navigating these elements naively
can cause disorientation. To address this challenge we propose a
method called GraphMaps, mimicking the browsing experience of online
geographic maps.

GraphMaps creates a sequence of layers, where each layer refines the
previous one. During graph browsing, GraphMaps chooses the layer
corresponding to the zoom level, and renders only those entities of the
layer that intersect the current viewport. The result is that,
regardless of the graph size, the number of entities rendered at each
view does not exceed a predefined threshold, yet all graph
elements can be explored by the standard zoom and pan operations.

GraphMaps preprocesses a graph in such a way that during browsing, the
geometry of the entities is stable, and the viewer is responsive. Our
case studies indicate that GraphMaps is useful in gaining an overview of a
large graph, and also in exploring a graph on a finer level of
detail.
\end{abstract}
\section{Introduction}
Graphs are ubiquitous in many different domains such as information
technology, social analysis or biology. Graphs are routinely
visualized, but their large size is often a barrier. The difficulty
comes not from the layout which can be calculated very fast. (For
example, by using Brandes and Pich's algorithm~\cite{UlrikPich} a
graph with several thousand nodes and links can be laid out in a few
seconds on a regular personal computer.)  Rather, viewing and browsing
these large graphs is problematic. Firstly, rendering thousands of
graphical elements on a computer might take a considerable time and
may result in a cluttered image if the graph is dense.  Secondly,
navigating thousands of elements rendered naively disorients the user.

\newcommand*{\shifttext}[2]{%
  \settowidth{\@tempdima}{#2}%
  \makebox[\@tempdima]{\hspace*{#1}#2}%
}
\begin{figure*}[tcb]
  \centering
\begin{picture}(340,252)
\put(0,102){\includegraphics[width=.49 \linewidth]{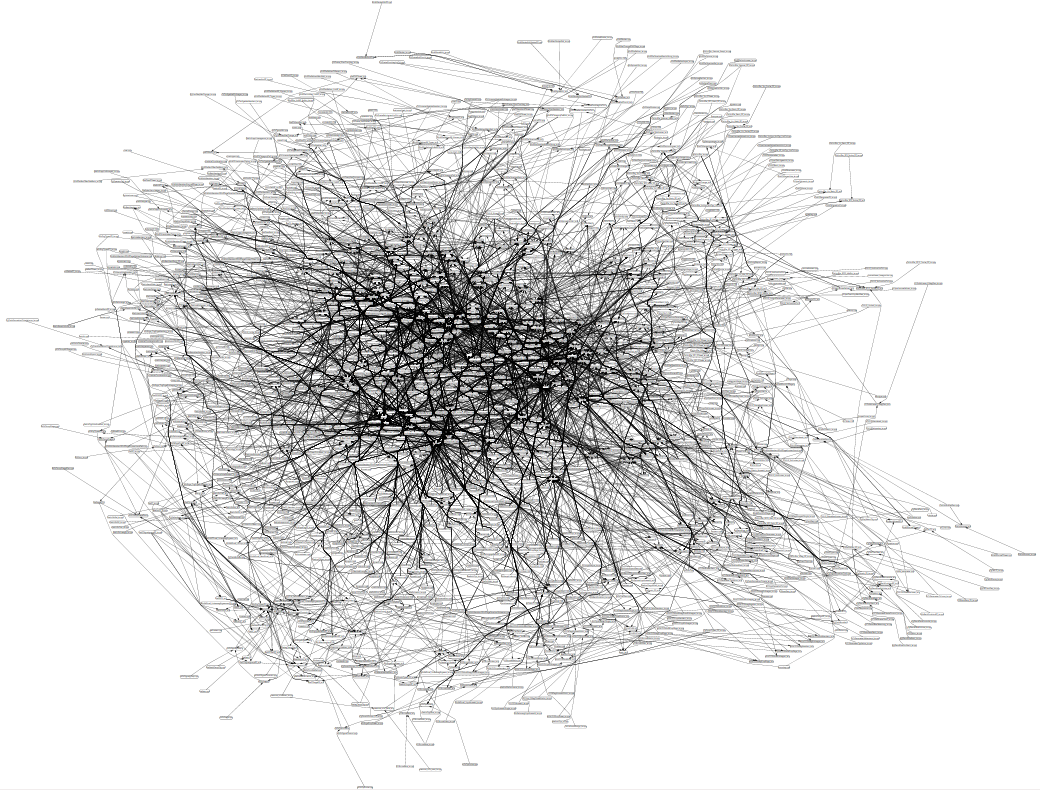}}
\put(5,105){{\setlength{\fboxsep}{0pt}\colorbox{white}{(a)}}} 
 \put(172,102){\includegraphics[width=.49 \linewidth]{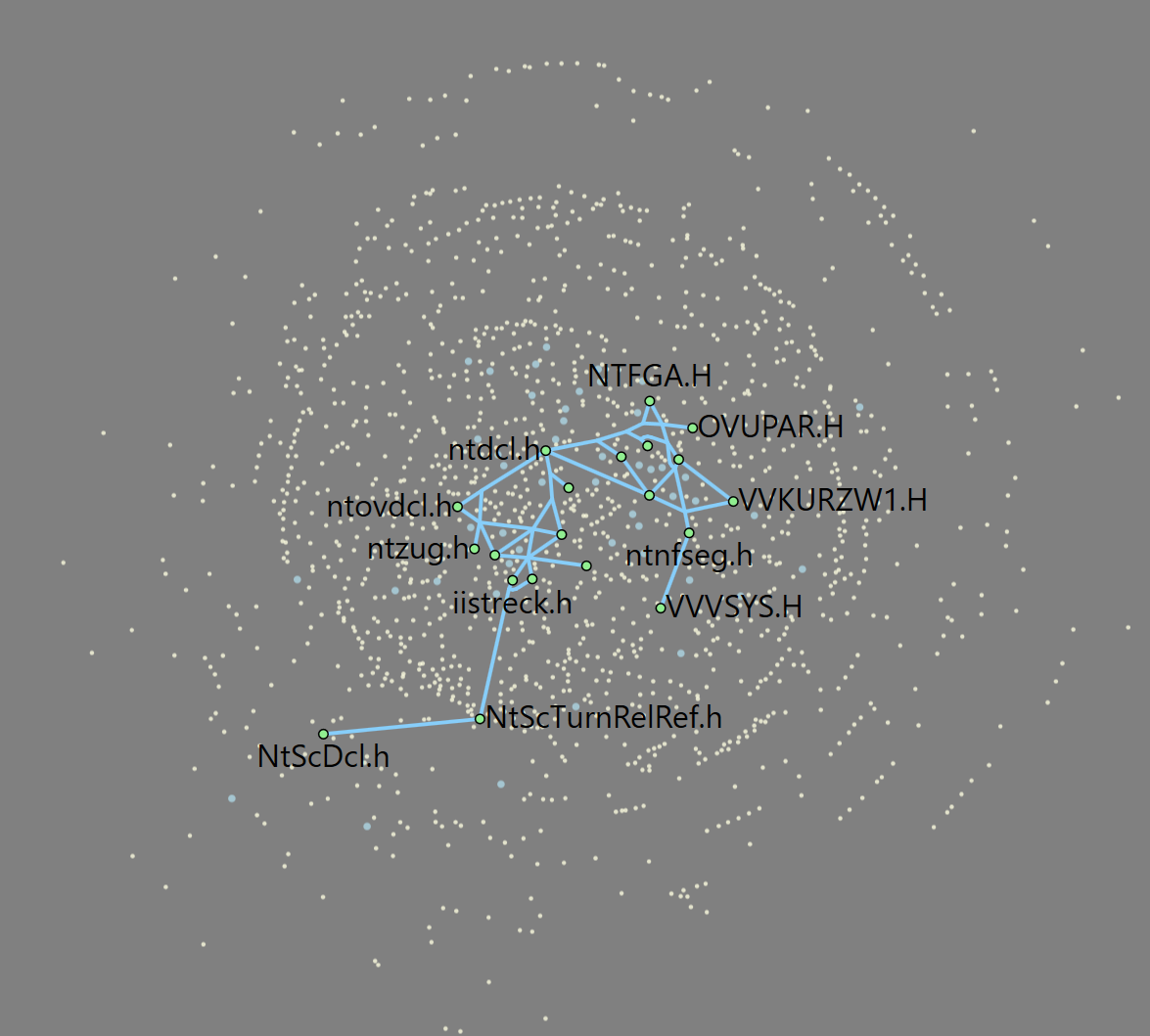}}
\put(175,105){{\setlength{\fboxsep}{0pt}\colorbox{white}{(b)}}} 
\hfill
 \put(0,0){\includegraphics[width=.49 \linewidth]{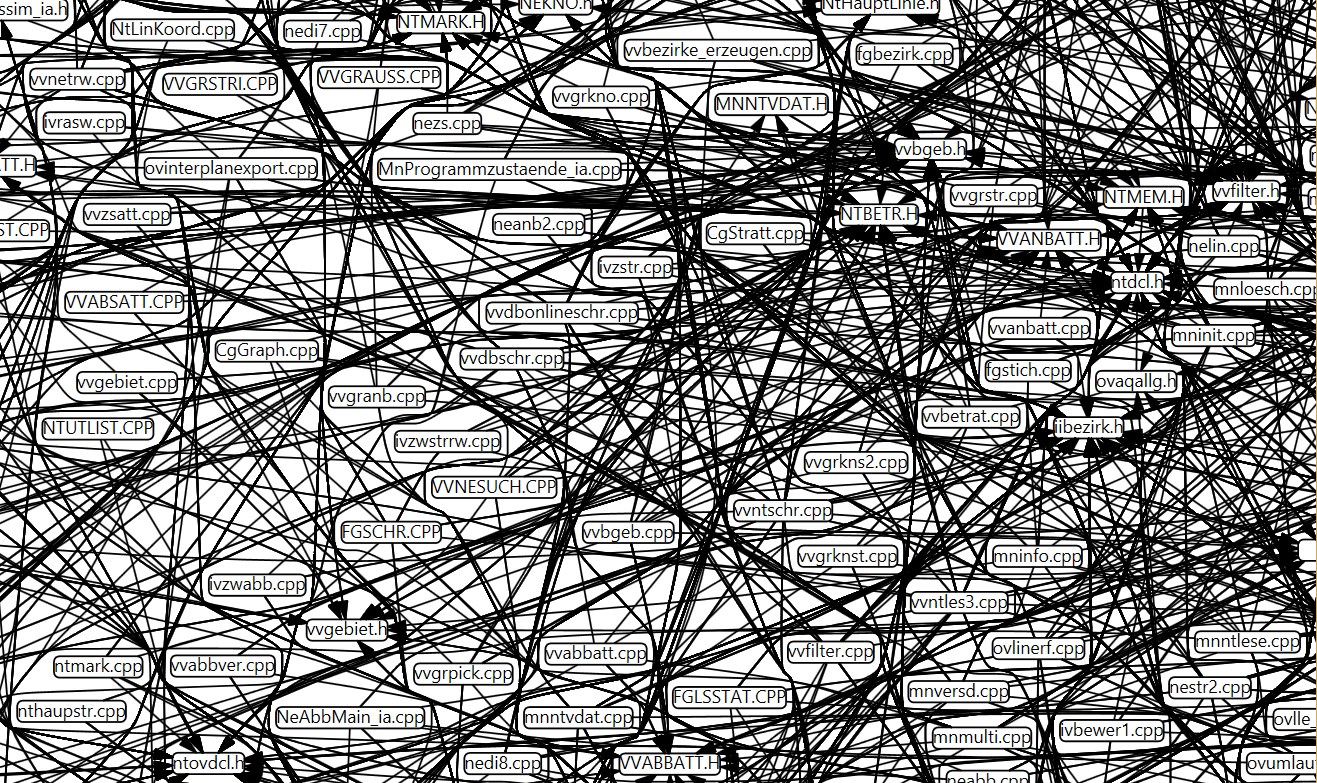}}
\put(5,3){{\setlength{\fboxsep}{0pt}\colorbox{white}{(c)}}} 
 \put(172,0){\includegraphics[width=.49 \linewidth]{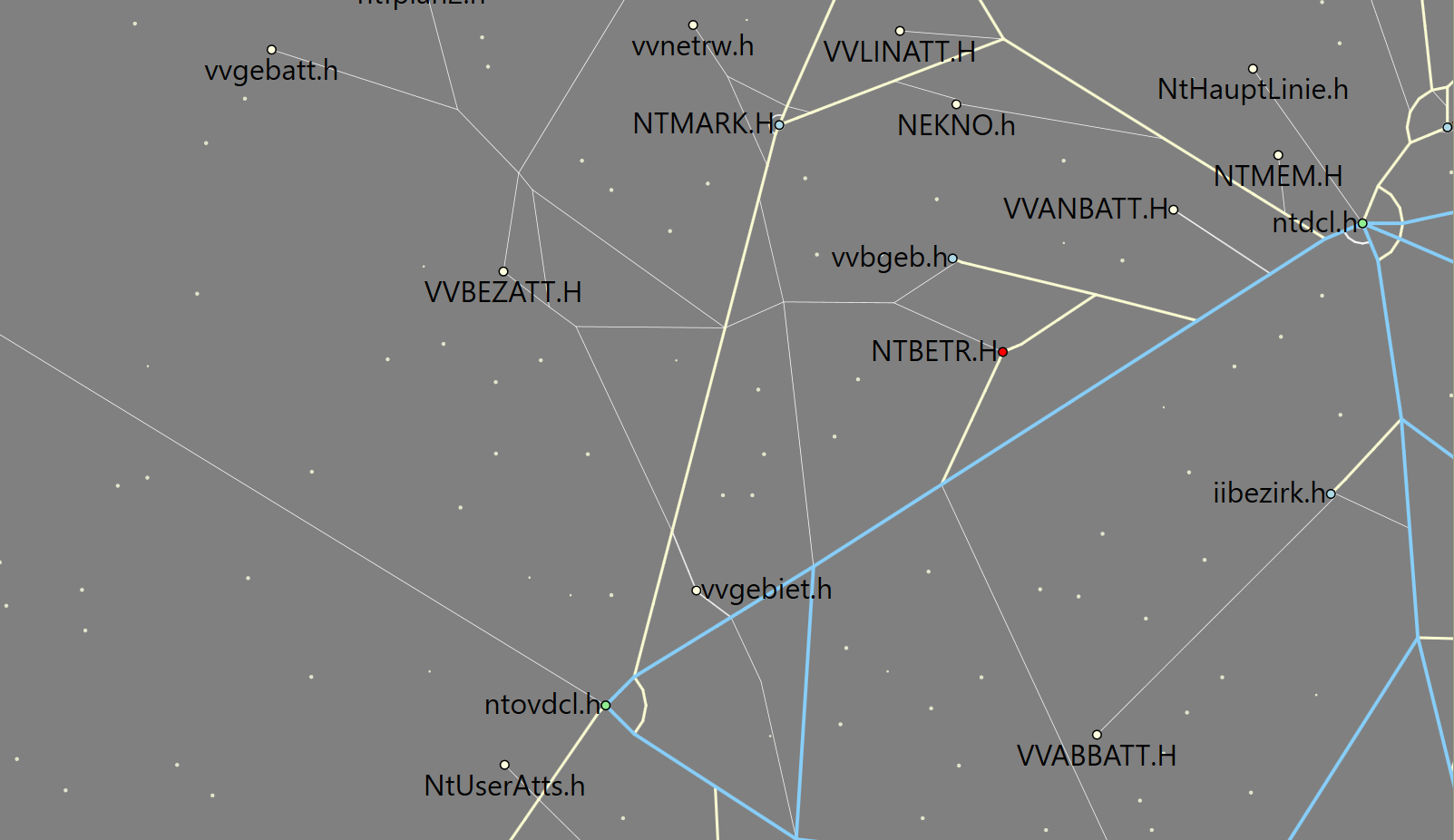}}
\put(175,3){{\setlength{\fboxsep}{0pt}\colorbox{white}{(d)}}} 
\end{picture} 
  \caption{\label{fig:b100} A graph\protect\footnotemark~with $1436$
    nodes and $5806$ edges. (a)~The full view with a standard method
    which draws all nodes and edges regardless of the zoom. (b)~The
    full view rendered by GraphMaps. (c)~A view with the zoom close to
    $9.13$ with the standard viewing. (d)~A view with zoom 9.26 with
    GraphMaps.}
\end{figure*}

Our intention is to provide a graph browsing experience similar to
that of online geographic maps, for example, Bing or Google Maps. We
propose a set of requirements for such a visualization and introduce a
method, GraphMaps, fulfilling these requirements. GraphMaps renders a
graph as an interactive map by displaying only the most essential
elements for the current view. We allow fast interactions using
standard pan and zoom operations.  The drawing is visually stable, in
the sense that during these operations, nodes do not change their
relative positions, and edges do not change their geometry. To the
best of our knowledge, GraphMaps is the first method having these
properties. Fig.~\ref{fig:b100} illustrates the
method\footnotetext{https://github.com/ekoontz/graphviz/blob/master/rtest/graphs/b100.dot}.

\subsubsection*{Related Work.}
The problem of visualizing large graphs has been extensively addressed
in the literature, but here we discuss only the approaches most relevant to ours.
%
Most research efforts have concentrated on reducing the number of
visual elements to make node-link diagrams readable. We mention three
different approaches.

\paragraph{Aggregation techniques}

group vertices and edges of the graph together to obtain a smaller
graph~\cite{GenGeomGr}.  Most techniques compute a hierarchical
partitioning and offer interaction to explore different branches of
the tree. Early work by Eades and Feng~\cite{eades97} proposes
3-dimension visualization to navigate in this tree. Later
research~\cite{abello2006ask} demonstrated that techniques
implementing this approach can scale up to very large graphs ($16$
million edges and $200,000$ nodes).
Similar approaches attempt to give more clues about the content of the
aggregates.  Balzer and Deussen~\cite{balzer2007level} represent
aggregates by 3-dimensional shapes, whose sizes convey the number of
vertices, with bundled edges whose thickness indicates the density of
the connection. Zinmaier et al.~\cite{zinsmaier2012interactive}
utilize the GPU to create an aggregated image of a large graph, using
heatmaps to convey the number of vertices and edges in the aggregates.


While these techniques can scale up to very large graphs, they have
several disadvantages.  Aggregating nodes involves a loss of
information concerning intra- and inter-connectivity. Spatial
stability is another issue. The drawing may change dramatically when
several entities collapse into one, potentially disorienting the
user.

\paragraph{Multiscale techniques} allow users to explore the partition hierarchy
at different depths. These techniques aim at disambiguating the
topology induced by aggregating vertices and edges together. Auber et
al.~\cite{auber2003} propose a clustered multiscale technique, for
which the interiors of the aggregates are shown at a finer scale.
However, aggregated edges are shown between clusters, risking
misinterpretation. Henry et al.~\cite{nodetrix} propose a hybrid
technique that can only represent one level of clustering.  In a
similar spirit van Ham and van Wijk~\cite{smallworld2004} propose an
aggregate method in which users can expand one aggregate at a time.
Henry et al.~\cite{duplications} attempt to indicate inter-aggregate
connectivity by duplicating elements, but their solution only works
for a single level of clustering.


A different technique by Koren et al.~\cite{KorenTopologicalFish} aims
at smoothly integrating the level of detail, as opposed to discrete
partitioning of the graph.  The authors build a hierarchy of graphs
and, for each viewpoint, construct a smaller graph by ``borrowing''
parts of the corresponding hierarchy levels and adjusting the layout
of this smaller graph.  The strength of this technique is that it
avoids potentially misleading partitioning of the graph. However,
there is a lack of stability: a small change in viewport may lead to a
large change in the viewed graph. The fisheye technique of the paper
may also add a spatial distortion, further disrupting the user's
mental map.

\paragraph{Filtering techniques} approach the visualization of large
graphs by filtering the elements rendered in the view. For example,
SocialAction~\cite{socialaction} provides a set of measures to rank
vertices and edges, rendering node-link diagrams with manageable
sizes. A related technique by Perer and van Ham~\cite{van2009search}
proposes to build a filtered node-link diagram based on the queries
made by the user, via the concept of degree-of-interest. The principal
disadvantage of these techniques is the lack of overview of the entire
graph.
The progressive rendering approach proposed by Auber et
al.~\cite{Auber2004,Auber2002} renders the node-link diagram entities
in order of their importance. The rendering stops when the view
changes. Given enough non-interaction time, all entities intersecting
the viewport are rendered. In contrast to the previous filtering
techniques, the benefit of this approach is to reveal the key features
of the graph first. However, the user does not directly control the
level of detail, which potentially disrupts the experience.
\subsubsection{Design Rationale Motivated by Online Maps.}

Exploring online geographic maps is probably the most common scenario
for browsing large graphs. Millions of people every day browse maps on
their cellular phones or computers for finding a location or driving
directions. We decided to search for key ideas used in interactive
geographic maps that could be applied to browsing general graphs.

One insight is that showing everything at all times is
counterproductive. In a digital map on the top level we only see major
cities and major roads connecting them. Objects on finer levels of
detail, like smaller roads, are not shown explicitly. They may be
hinted by using, for example, pre-rendered bitmap tiles. When we zoom
in, other, less significant features appear and become labeled.
%
%
Online maps can answer search queries such as finding a route from
source to destination or showing a point of interest close to the
mouse position.

\paragraph*{Design goals} identified are as follows:
\begin{compactenum} 
\item The method should be able to reveal most details of the graph by
  using only the zoom in, zoom out, and pan operations. As we zoom in,
  more vertices and edges should appear according to their
  importance. Interactions such as node or edge highlighting or search
  by label should help discover further details.
\item During these operations, the user's mental map must be
  preserved. In particular, vertex positions and edge trajectories
  should not change between zoom levels.
\item In order to limit visual clutter, the number of rendered visual
  elements at each view should not exceed some predefined bound.
\end{compactenum}

\section{Method Description}
\label{sec:method-description}

The input to the algorithm is a graph with given node positions; the
edge routes are not part of the input.
 The output is a set of
\emph{layers} containing nodes and edge routes. Let $G=(V,E)$ be the input
graph, where $V$ is the set of nodes and $E$ the set of edges.
The input also includes an ordering of $V$. This ordering should
reflect the relative importance of the vertices. If such an ordering
is not provided then we can sort the nodes, for example, by
using PageRank~\cite{page1999pagerank}, by node degree, or by
shortest-path betweenness~\cite{Brandes2007}. Finding a good order
reflecting the node importance is a separate problem which is outside
the scope of this research. Here we look at the node order as input
and consider $V =[v_1,\dots,v_N]$ to be an array.

Before giving a detailed description of the algorithm we describe its
high level steps.

We build the \textbf{layer}~0, denoted by~$L_0$, as follows.
For some number $k_0>0$ we assign nodes $v_1,\dots,v_{k_0}$ to~$L_0$
and route all edges $(v_m,v_n) \in E$ with $m,n \le k_0$.  Suppose we
have already built~$L_{i-1}$ containing vertices $v_j$, for~$j \le
k_{i-1}$. Then, if $k_{i-1} < N$, that is we have vertices that are
not assigned to a layer yet, for a number $k_i \geq k_{i-1}$ we assign
nodes $v_1,\dots,v_{k_i}$ to~$L_i$ and route all edges $(v_m,v_n) \in
E$ with $m,n \le k_i$. Otherwise we are done. Note that a node can be
assigned to several consecutive layers.
To achieve the assignment we define a function $z$ from $V$ to the set
$\{2^0,2^1,2^2,\dots\}$. The value $z(v)$ we call the \textbf{zoom
  level} of the node. For $n \in \mathbb N_0$, the layer $L_n$
contains node $v$ if and only if $z(v) \le 2^{n}$. For each layer an
edge is represented by a set of straight line segments called
\textbf{rails}. We define function $z$ on rails too, but the layer
assignment rule is different for rails; a rail $r$ belongs to $L_n$
iff $z(r)$ is equal to $2^n$.

\begin{figure}[tbh!]
\subfloat{\includegraphics[trim=0mm 0.2mm 0mm 0.2mm, clip]{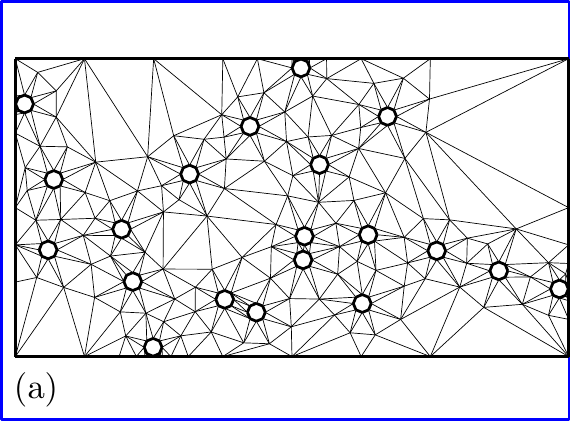}\label{fig:abstract:triang-0}}\hfill
\subfloat{\includegraphics[trim=0mm 0.2mm 0mm 0.2mm, clip]{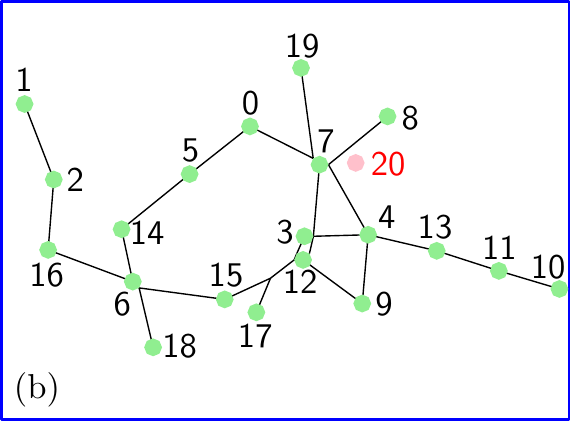}\label{fig:abstract:insertion-0}}\\
\subfloat{\includegraphics[trim=0mm 0.2mm 0mm 0.2mm, clip]{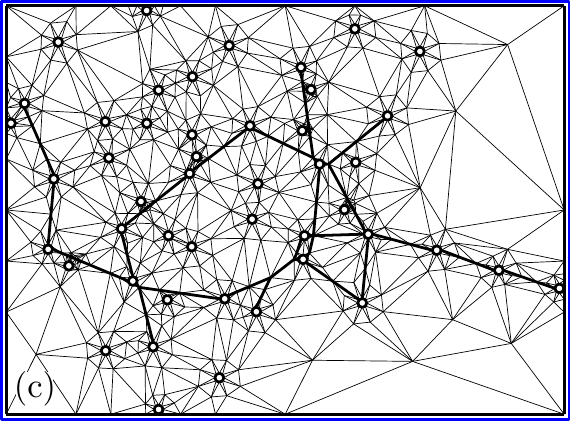}\label{fig:abstract:triang-1}}\hfill
\subfloat{\includegraphics[trim=0mm 0.2mm 0mm 0.2mm, clip]{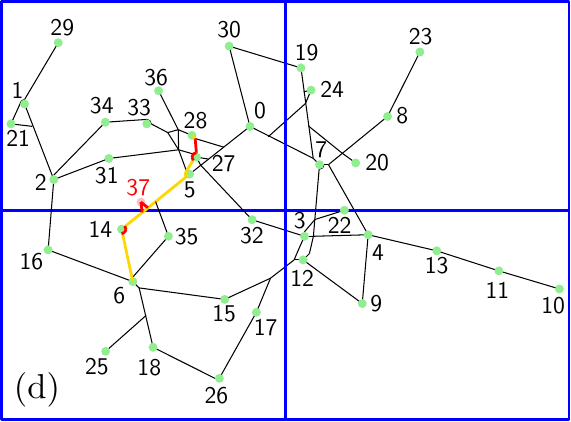}\label{fig:abstract:insertion-1}}\\
\subfloat{\includegraphics[trim=0mm 0.2mm 0mm 0.2mm, clip]{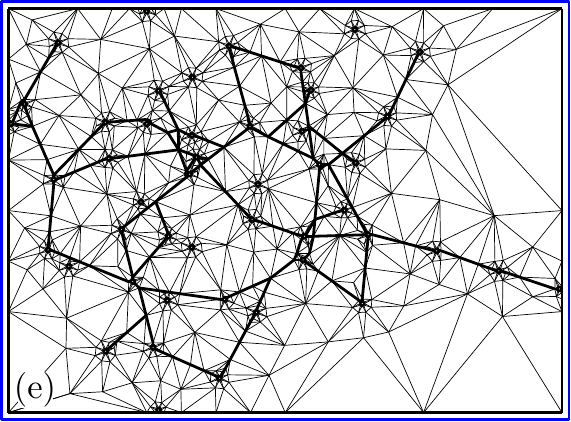}\label{fig:abstract:triang-2}}\hfill
\subfloat{\includegraphics[trim=0mm 0.2mm 0mm 0.2mm, clip]{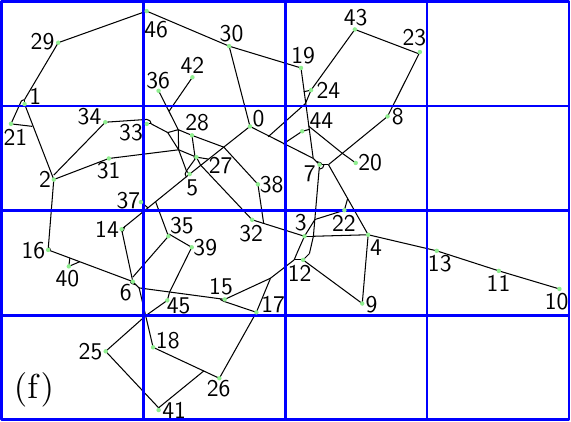}\label{fig:abstract:insertion-2}}
\caption{\label{fig:abstract} Graph \emph{abstract.dot}, $Q_N=80$,
  $Q_R=180$.  \protect\subref{fig:abstract:triang-0}~The mesh containing
  node boundaries of layer~0 (thick).
  \protect\subref{fig:abstract:insertion-0}~Nodes of layer~0 (green)
  and rails of edges routed using the mesh
  in~\protect\subref{fig:abstract:triang-0} (black). Adding node~20
  would exceed the node quota.
  \protect\subref{fig:abstract:triang-1}~The mesh containing rails and
  node boundaries of layer~0 and boundaries of candidate nodes for
  layer~1 (thick).
  \protect\subref{fig:abstract:insertion-1}~Node~37 has
  inserted nodes~6 and~28 as neighbors. The edges incident to Node~37 are routed
  through red rails, which are new maximal rails. After
  adding the red rails the upper left tile would intersect more
  than~$Q_R/4=45$ maximal rails.
  \protect\subref{fig:abstract:triang-2}~Mesh containing rails and
  node boundaries of layer~1 and candidates for
  layer~2.
  \protect\subref{fig:abstract:insertion-2}~All nodes and edges are
  added to layer~2 without exceeding the quotas. }
\end{figure}

\subsubsection{Calculation of layers.}

\newcommand{\blockline}{\noindent\hspace{-0.025\textwidth}%
    \rule[8.5pt]{0.989\linewidth}{0.8pt} \\[-0.80\baselineskip] }
\newcommand{\ProcessLayer}{\textbf{ProcessLayer}\xspace}
\newcommand{\maxRails}{\textbf{maximalRails}\xspace}
\newcommand{\maxRailsV}{\textbf{maximalRailsOfV}\xspace}
\newcommand{\insertedNodes}{\textbf{assignedNodes}\xspace}
\newcommand{\oldEdges}{\textbf{oldEdges}\xspace}
\newcommand{\oldRails}{\textbf{oldRails}\xspace}
\newcommand{\candidateNodes}{\textbf{candidateNodes}\xspace}
\newcommand{\prevRails}{\textbf{prevLayerRails}\xspace}
\newcommand{\prevRailsUpd}{\textbf{prevLayerRailsUpdated}\xspace}
\newcommand{\tileMap}{\textbf{tileMap}\xspace}
\newcommand{\tileMapij}{\textbf{tileMap}$(i,j)$\xspace}
\newcommand{\newRails}{\textbf{newRails}\xspace}
\newcommand{\rails}[1]{\textbf{rails}(#1)\xspace}
\LinesNumberedHidden
\begin{algorithm}[tbh!]
\caption{Setting node zoom levels with rail quota}
\label{algo:node-rail-zl}
\hspace{-0.25cm}\textbf{SetNodeRailZoomLevels()} \;
\blockline
\setcounter{AlgoLine}{0}
\ShowLn \insertedNodes = $\emptyset$, \maxRails = $\emptyset$\;
\ShowLn tileSize = BoundingBox($G$), nodeSize = InitialNodeSize, $n = 0$\;
\ShowLn \While{$|\textnormal\insertedNodes| < |V|$}{\label{alg:outerLoop}
\ShowLn \textbf{ProcessLayer()}\;\label{alg:processOnLevel}
\ShowLn tileSize = tileSize * $0.5$, nodeSize = nodeSize * $0.5$, $n = n + 1$ \;
}
\blockline
\hspace{-0.25cm}\textbf{ProcessLayer()}\;
\blockline
{
\addtocounter{AlgoLine}{-3}
\ShowLn initialize $\tileMap$ with \insertedNodes and \maxRails \; \label{alg:initTileMap}
\ShowLn \candidateNodes = TryAddingNodesUntilNodeQuotaFull($n$)\; \label{alg:candidateNodes}
\ShowLn \prevRails = rails of $L_{n-1}$, or $\emptyset$ if $n = 0$ \;
\ShowLn $M$ = GenerateMesh(\insertedNodes $\cup$ \candidateNodes, \prevRails)\; \label{alg:generateMesh}
\ShowLn \prevRailsUpd = SegmentsOfMeshOn($M$,\prevRails) \; \label{alg:prevRailsUpd}
\ShowLn $z(\prevRailsUpd) = 2^n$ \; \label{alg:setPrevRailsUpd}
\ShowLn \ForEach {$v \in $~\textnormal\candidateNodes}{ \label{alg:candidateLoop}
\ShowLn \rails{$v$} = RailsOnEdgeRoutes($v$, \insertedNodes,
$M$)\; \label{alg:routeEdges}
\ShowLn \maxRailsV = FindMaximalRails(\rails{$v$}) \; \label{alg:FindMaximalRails}
\ShowLn \lIf{adding all \maxRailsV to $L_n$ exceeds rail quota \label{alg:candidateIf}}{ \Return}
\ShowLn set~$z(v)=z(\rails{v})=  2^n$\;
\ShowLn update $\tileMap$ with $v$ and \maxRailsV \; \label{alg:candidateAdd} 
\ShowLn \maxRails = $\maxRails \cup \maxRailsV$ \;
\ShowLn$\insertedNodes = \insertedNodes \cup \{ v \}$
} 
}
\end{algorithm}

\newcommand{\viewRect}{P}
\newcommand{\width}{\textnormal{width}}
\newcommand{\height}{\textnormal{height}}

Algorithm~\ref{algo:node-rail-zl} computes the function $z$ on the
nodes and extends it to the rails. The flow of the algorithm is
illustrated in Fig.~\ref{fig:abstract}.

Let~$B$ be the bounding box of~$G$ with width~$w$ and
height~$h$. For $i$,$j$,$n \in \mathbb N_0$ we define $T_{ij}^n$ as
the rectangle with width~$w_n = w/2^n$, height~$h_n =
h/2^n$ and the bottom left corner with coordinates $x=u+i
\cdot w_n$ and $y=v+j \cdot h_n$, where $(u,v)$ is the left bottom
corner of $B$. We call $T_{ij}^n$ a \textbf{tile}.
The algorithm is driven by positive integers $Q_N$ and~$Q_R$, which we
call \textbf{node} and \textbf{rail quota}, respectively.  We say that
tile $T_{ij}^n$ \emph{exceeds node quota}~$Q_N$ if it intersects
more than~$Q_N/4$ nodes of layer~$n$.

To work with the rail quota $Q_R$ we need the following
definition. For a set of rails $R$ and a rail $r \in R$ we call $r$
\textbf{maximal in $R$} if $r$ is not a sub-segment of any other rail
in $R$.
During the algorithm we maintain the set of maximal rails among the
set of rails already assigned to layers and count intersections
between the tiles and the maximal rails only.
The union of all maximal rails will always form the same set of points
as the union of all rails created so far.
Tile~$T_{ij}^n$ \emph{exceeds rail quota} if it intersects more than
$Q_R/4$ rails which are maximal among all rails of layer~$n$ and
below. Assume both~$Q_N$ and~$Q_R$ are divisible by~4.

The outer loop of Algorithm~\ref{algo:node-rail-zl} in
line~\ref{alg:outerLoop} works as follows. Starting with $n=0$, each
call to \ProcessLayer in line~\ref{alg:processOnLevel} tries to
greedily assign the nodes to the current layer. Each such attempt
starts with the first unassigned node in~$V$. Procedure \ProcessLayer
terminates if adding the next node in~$V$ and its edges incident to
already assigned nodes would exceed the node or rail quota of some tile.

After calling \ProcessLayer tile dimensions and the node size become
twice smaller, and a new attempt starts for $n+1$ in
line~\ref{alg:processOnLevel}. The algorithm stops when all nodes are
assigned to a layer.
Fig.~\ref{fig:abstract} illustrates Algorithm~\ref{algo:node-rail-zl}
for graph
\emph{abstract.dot}\footnote{https://github.com/ekoontz/graphviz/blob/master/rtest/graphs/abstract.dot}
with 47 nodes, labeled from~0 to~46 according to their order in~$V$.

In line~\ref{alg:initTileMap} \textbf{tileMap} is a map from $\N^2$ to
$\N^2$. If for some $n$ we have $\tileMap(i,j)$ $=(r,k)$, then $r$ nodes in $L_n$
and~$k$ maximal rails intersect $T_{ij}^n$.

Let us now consider \ProcessLayer for $n=0$. For this case the domain
of \tileMap is $\{(0,0)\}$. The sets \insertedNodes and \maxRails are
empty, and there is only one tile~$T_{0,0}^0$, which has the size
of~$B$ (blue in Fig.~\ref{fig:abstract:triang-0}
and~\ref{fig:abstract:insertion-0}). After executing
line~\ref{alg:candidateNodes} the set \candidateNodes contains the
first~$Q_N/4$ nodes of~$V$ (green in
Fig.~\ref{fig:abstract:insertion-0}). The boundaries of these nodes
are represented by regular polygons (thick in
Fig.~\ref{fig:abstract:triang-0}) and used to generate a triangular
mesh~$M$.  The mesh is a constrained triangulation in a sense that any
straight line segment of the input can still be traced in $M$ although
it can be split into several segments. The edges with both endpoints
in \candidateNodes are routed on $M$.

In the~$i+1$-th iteration of the loop in line~\ref{alg:candidateLoop}
the algorithm tries to add node~$i$, while nodes~$0, \dots, i-1$ have
already been added to $L_0$, and $\tileMap(0,0)=(i,k)$, where $k \le
Q_R/4$ is the number of rails used by edges routed so far. All these
rails are maximal rails by construction.

In line~\ref{alg:routeEdges} the routes of edges from node~$i$ to
nodes~$0, \dots, i-1$ are computed as shortest paths on~$M$, and the
set \rails{$v$} is the set of all rails of these routes.  In
line~\ref{alg:FindMaximalRails} we find \maxRailsV, the rails from
\rails{$v$} which are maximal with respect to the set $\maxRails \cup
\rails{$v$}$. In the case of~$n=0$ they are all the rails of
\textbf{rails}$(v)\setminus \maxRails$. For~$n \geq 1$, these are the
rails from \rails{$v$} covered by no rail from \maxRails. In
Fig.~\ref{fig:abstract:insertion-1}, such maximal rails for node~37
are drawn red.

If~$T_{0,0}^0$ still contains no more than~$Q_R/4$ rails after
adding~\maxRailsV, then node~$i$ is added to $L_0$. Otherwise,
\ProcessLayer terminates. In Fig.~\ref{fig:abstract:insertion-0},
all~$Q_N/4=20$ candidate nodes and the rails on the corresponding
edges could be added to $L_0$.

The procedure works similarly for~$n \geq 1$. One notable difference
is that rails from~$L_{n-1}$ are passed as input to the mesh generator
in addition to the boundaries of the appropriate nodes in
line~\ref{alg:generateMesh}. For more details
 we refer to the proof of
Lemma~\ref{smalltiles} in the
appendix. Fig.~\ref{fig:abstract:triang-1}, \dots,
\ref{fig:abstract:insertion-2} show \ProcessLayer for
$n=1,2$.

\subsubsection{Using the layers during the visualization.}\label{sub:usinglayers}
Let $H$ be a rectangle. We denote by $w(H)$ the width of $H$ and by
$h(p)$ the height of $H$. Recall that $B$ is the bounding box
of~$G$. Then the \emph{zoom level of $H$ to $B$} is the value
$l(H)=\min \{ \frac {w(B)}{w(H)}, \frac {h(B)}{h(H)}\}$.

Let ~$K$ be the transformation matrix from the graph to the user
window $W$. Then the rectangle $P=K^{-1}(W)$, where $K^{-1}$ is the
inverse of $K$, is the current viewport.

To decide which elements of~$G$ are displayed to the user, we find the
\emph{zoom level} $Z = l(P)$ and set the layer index $n = \max (0,
{ \lfloor \log_2 Z \rfloor })$. Finally, the elements displayed to the
user are all the nodes and rails of layer $L_n$ intersecting $P$.
We show in the appendix in Theorem~\ref{th:nodecorrect} that, by
following this strategy, we render at most $Q_N$ nodes, and the
rendered rails can be exactly covered by at most $Q_R$ maximal rails.



\subsubsection{Edge routing and overlap removal.}
\label{sec:edge-routing}
Consider the nodes of $L_0$. To construct a graph on which the edges
are routed, we first create a regular polygon for each vertex. Then, we
generate a triangular mesh using the \emph{Triangle} mesh generator by
Shewchuk~\cite{Shewchuk2002}. By inserting additional vertices
Triangle creates meshes with a lower-bounded minimum angle, which
implies the upper-bounded vertex degree. Each edge between a pair of
$L_0$ nodes is then assigned the corresponding Euclidean shortest path
in the mesh, which is computed using the $A^*$ algorithm. Mesh segments
lying on such paths become rails of $L_0$ and the remaining mesh
segments are discarded.

We now proceed with edge routing for $L_n$ for $n \geq 1$. Consider
Procedure \ProcessLayer. Since the initial node placement did not take
edge trajectories into account, at the beginning of the procedure some
unassigned nodes might overlap the entities of $L_{n-1}$. We move
these nodes away from their initial positions to resolve these
overlaps, but this way we might create new overlaps with the nodes that are not
assigned to a layer yet.
%
%

The overlap removal process happens before
line~\ref{alg:candidateNodes}. We follow the metro map labeling method
of Wu et al.~\cite{Wu2011}. All line segments and bounding boxes of
fixed nodes are drawn on a monochromatic bitmap and the image is
\emph{dilated} by the diameter of a node on $L_n$. To define a
position for a candidate node $v$ at which it does not overlap already
placed nodes or rails, we find a free pixel~$p$ in the image, ideally
close to the initial location of $v$. We draw a
dilated~$v$ at~$p$ and proceed with the next candidate, etc.

To generate a graph for edge routing on $L_n$, we use the bounding
polygons of nodes from \candidateNodes and the nodes of $L_{n-1}$, and
the rails of $L_{n-1}$, as the input segments for
 Triangle. Already routed edges maintain their trajectories, while
edges incident to a node not belonging to $L_{n-1}$ are routed over
the triangulation created by Triangle in line~\ref{alg:routeEdges}. To
create the bundling effect by reusing existing rails, we
slightly reduce their weights during routing.

\subsubsection{Pre-rendered tiles.}

To help users gain spatial orientation, we hint the nodes which are
not yet visible at the current zoom level, but will appear if we zoom
in further. For this purpose we create and store on the disk the
images of some graph nodes and use them as the background.
The images are generated very fast and are loaded and unloaded
dynamically by a background thread to keep the visualization
responsive. For more details see Section~\ref{app:sec:tiles} of the
appendix.

\subsubsection{Interaction.}

We define several interactions in addition to the zoom and pan.
Clicking on a node (even if it is hinted, but not visible yet)
highlights all edges incident to it and unhides all adjacent
nodes. The highlighted elements are always shown regardless of the
zoom level.
Clicking on a rail highlights the most important edge passing through
it, and unhides the edge endpoints.
Additionally, nodes can be searched by substrings of their labels.


\section{Experiments}

\subsubsection{Visualizing and evaluating clusterings.}

The test graph of our first experiment is the \emph{Caltech}
graph, which was used by Nocaj et al.~\cite{Nocaj2014}. It is the graph of
Facebook friendships at California Institute of Technology from
September 2005 and contains 769 nodes and 16k edges~\cite{Traud2011}.
The nodes are labeled by class year and residence, or house, of the
corresponding student, and colored by the house. The label 0 denotes missing data.

\begin{figure}[tbh!]
\fbox{\includegraphics[width = .477\linewidth]{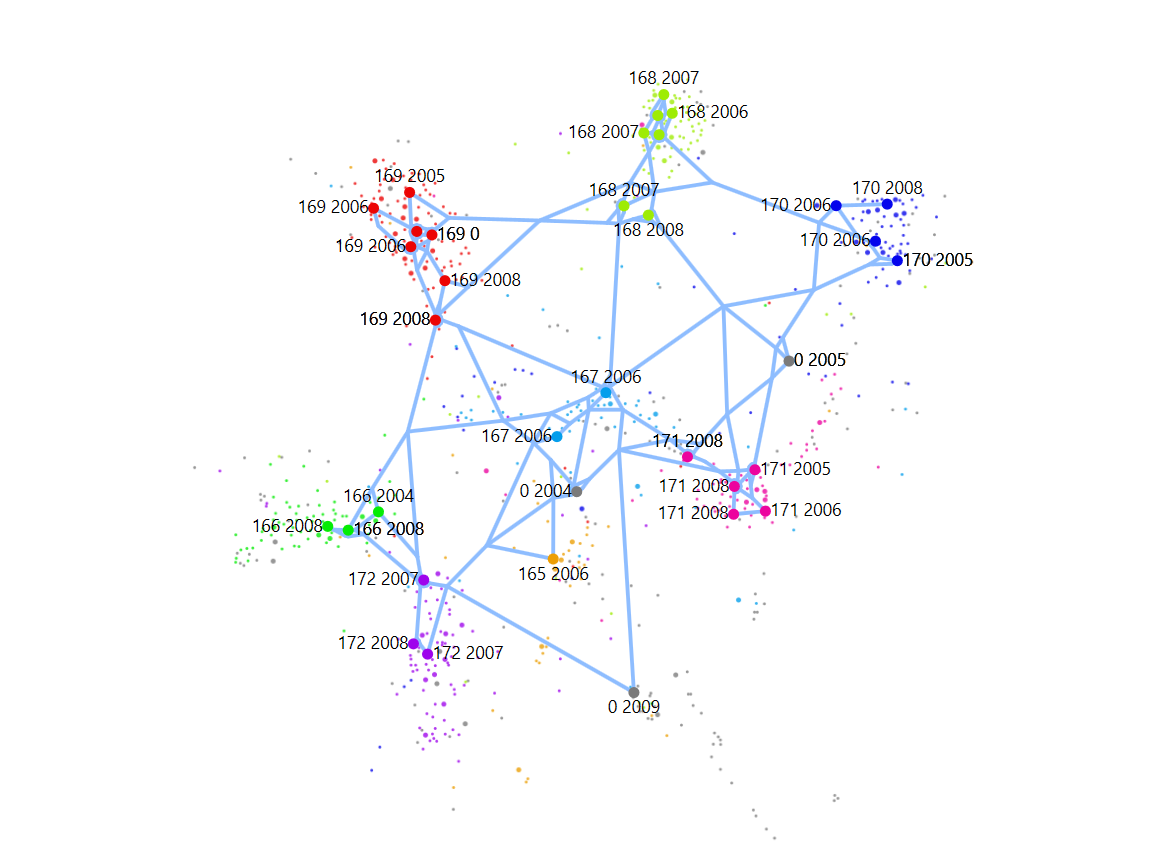}}
\fbox{\includegraphics[width = .477\linewidth]{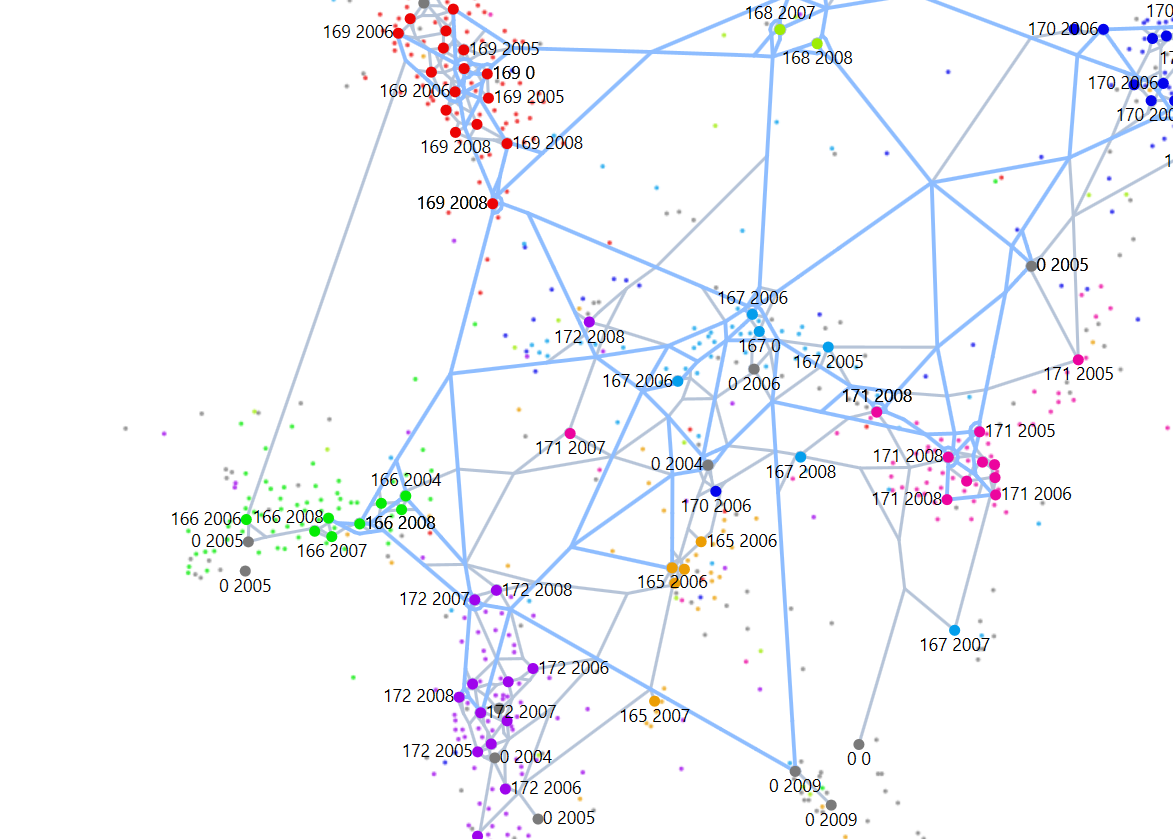}}
\fbox{\includegraphics[width = .477\linewidth]{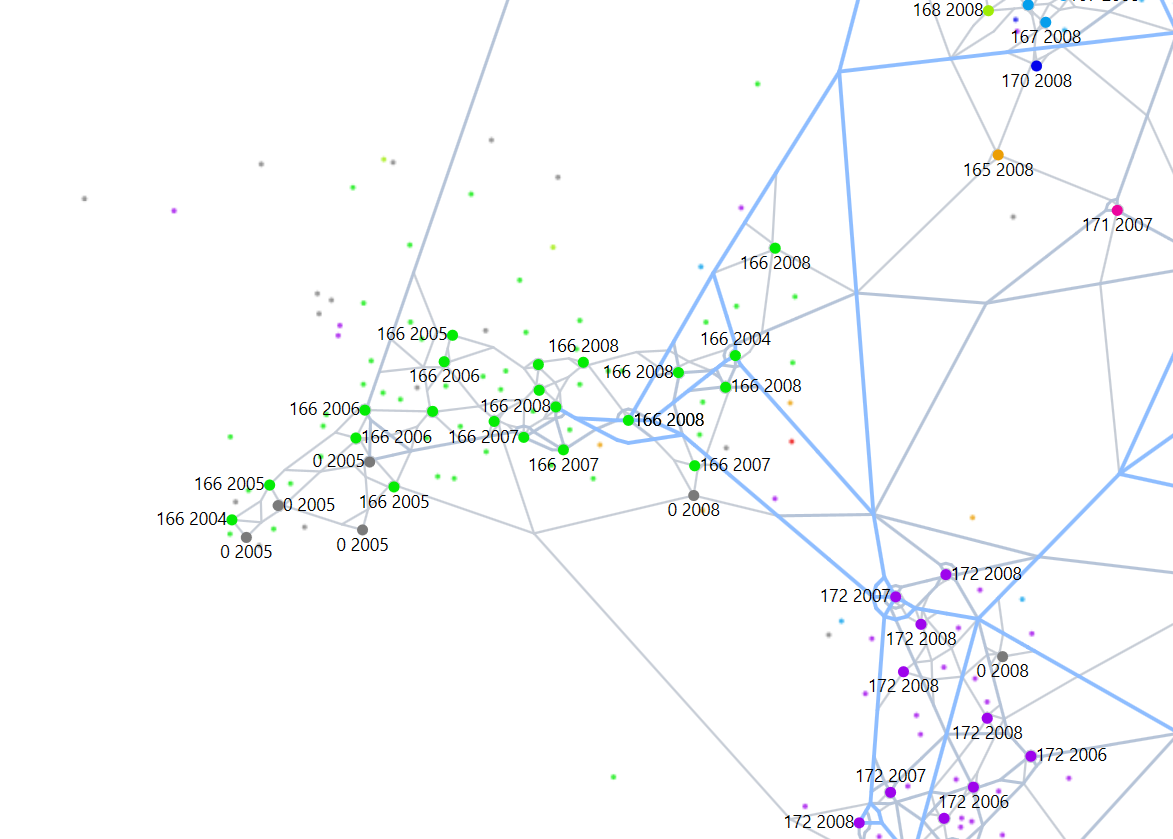}}
\fbox{\includegraphics[width = .477\linewidth]{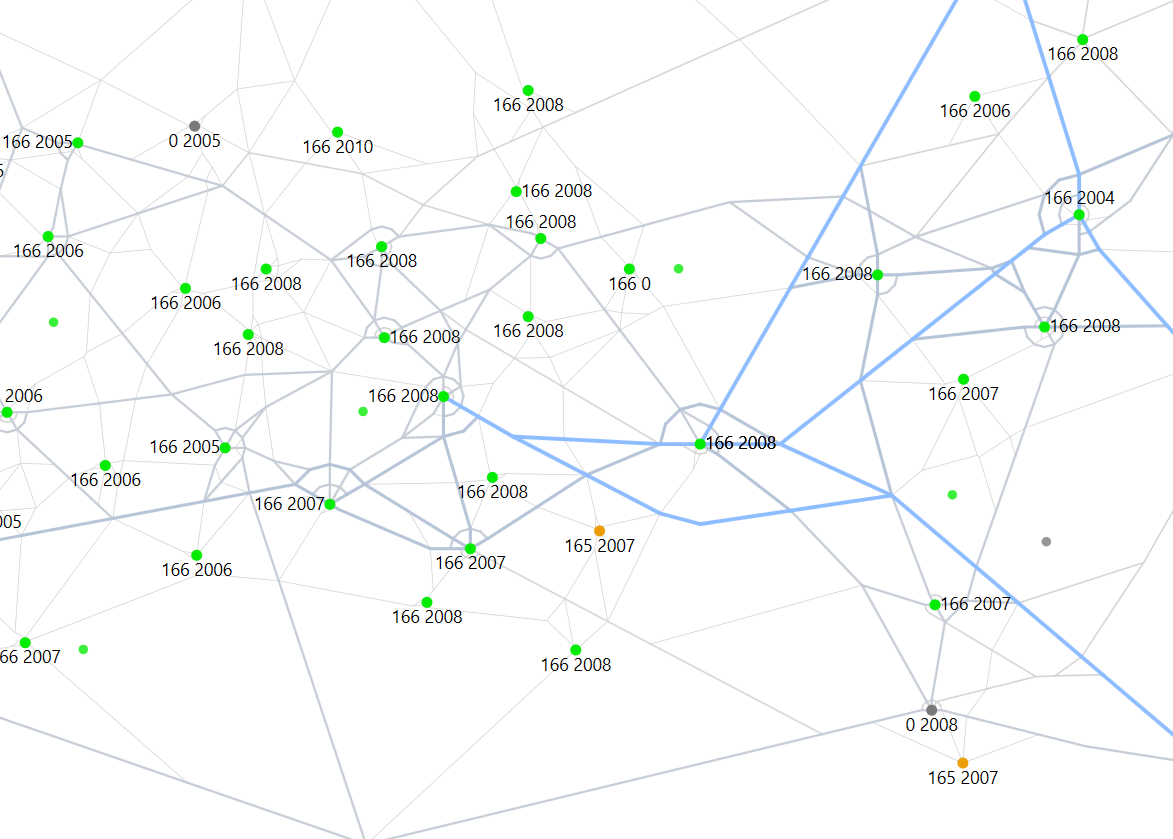}}
\caption{\label{fig:caltech:graphmaps} Caltech graph on four different
  zoom levels visualized using our approach.}
\end{figure}

A computer science researcher, with focus on clustering algorithms, used
GraphMaps to browse this graph. He was interested in discovering the
connectivity structure of the graph, e.g., which houses or years have
strong ties. The researcher's tool of choice for visualizing
clusterings was \emph{Gephi}~\cite{Gephi}. The node layout was
computed by a force-directed layout algorithm applied to the Simmelian
backbone of the graph, as proposed by Nocaj et al.~\cite{Nocaj2014}. %
The result is shown in Fig.~\ref{fig:gephi:all}. The same initial
layout was used as input for our algorithm. The resulting drawing on
four different zoom levels is shown in
Fig.~\ref{fig:caltech:graphmaps}.

\begin{figure}[tbh!]
\subfloat[]{\includegraphics[width = .33\linewidth]{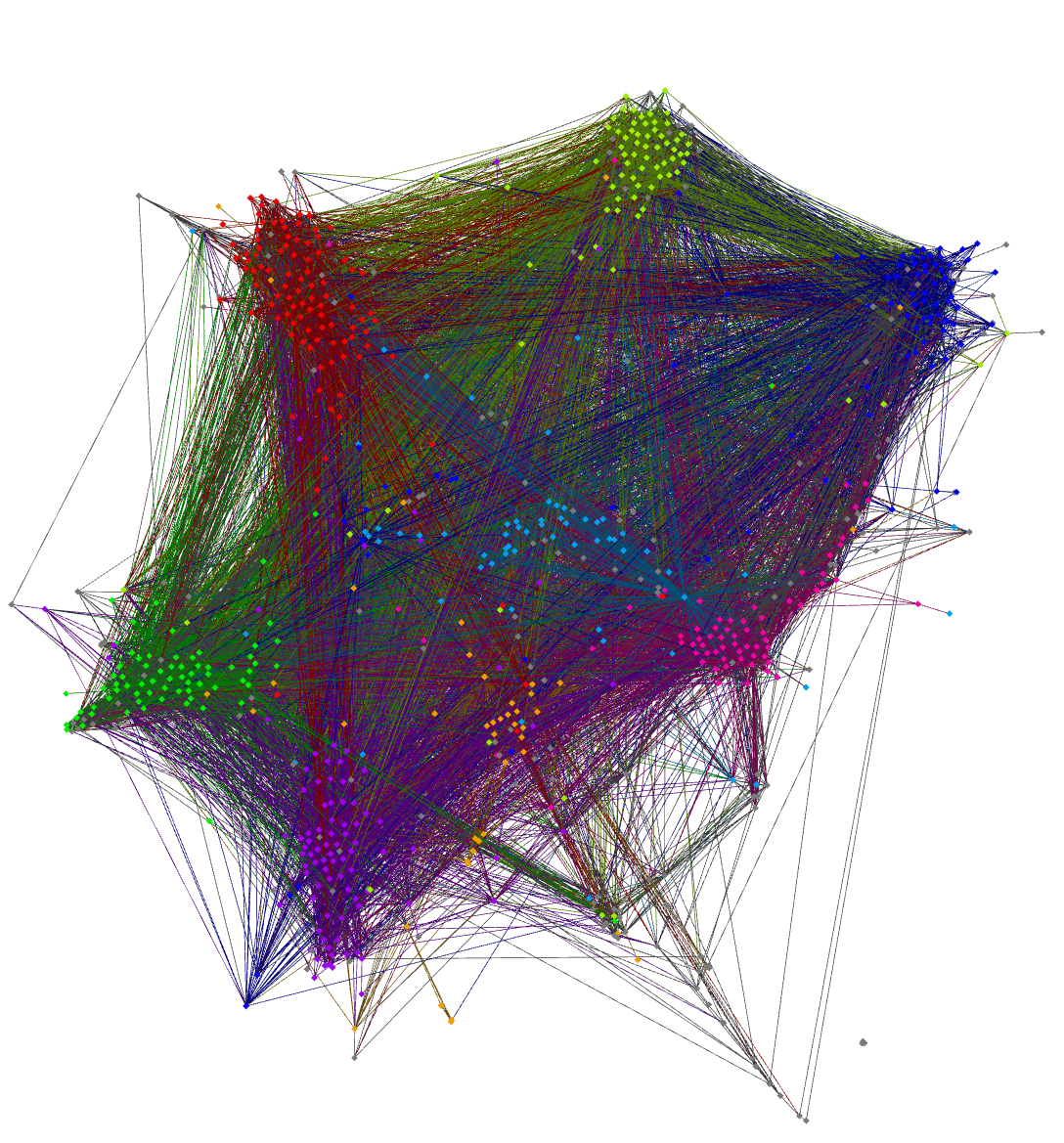}\label{fig:gephi:all}}
\subfloat[]{\includegraphics[width = .33\linewidth]{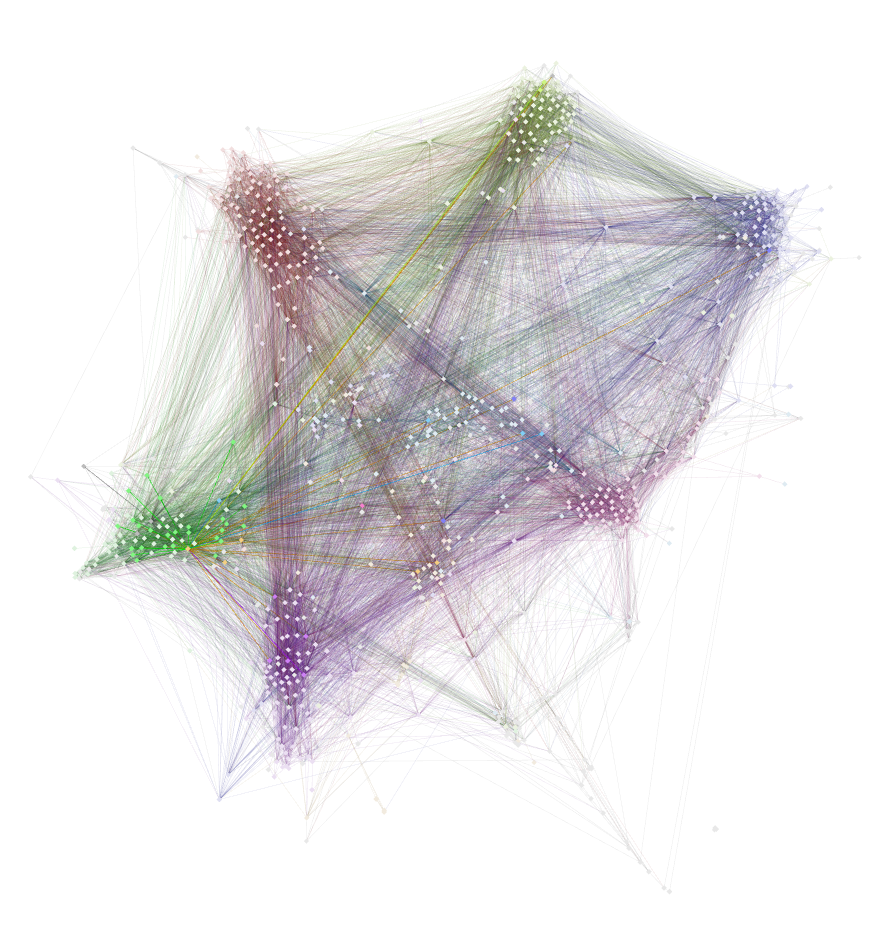}\label{fig:gephi:neighb}}
\subfloat[]{\includegraphics[trim = 50mm 20mm 50mm 15mm, clip, width = .3\linewidth]{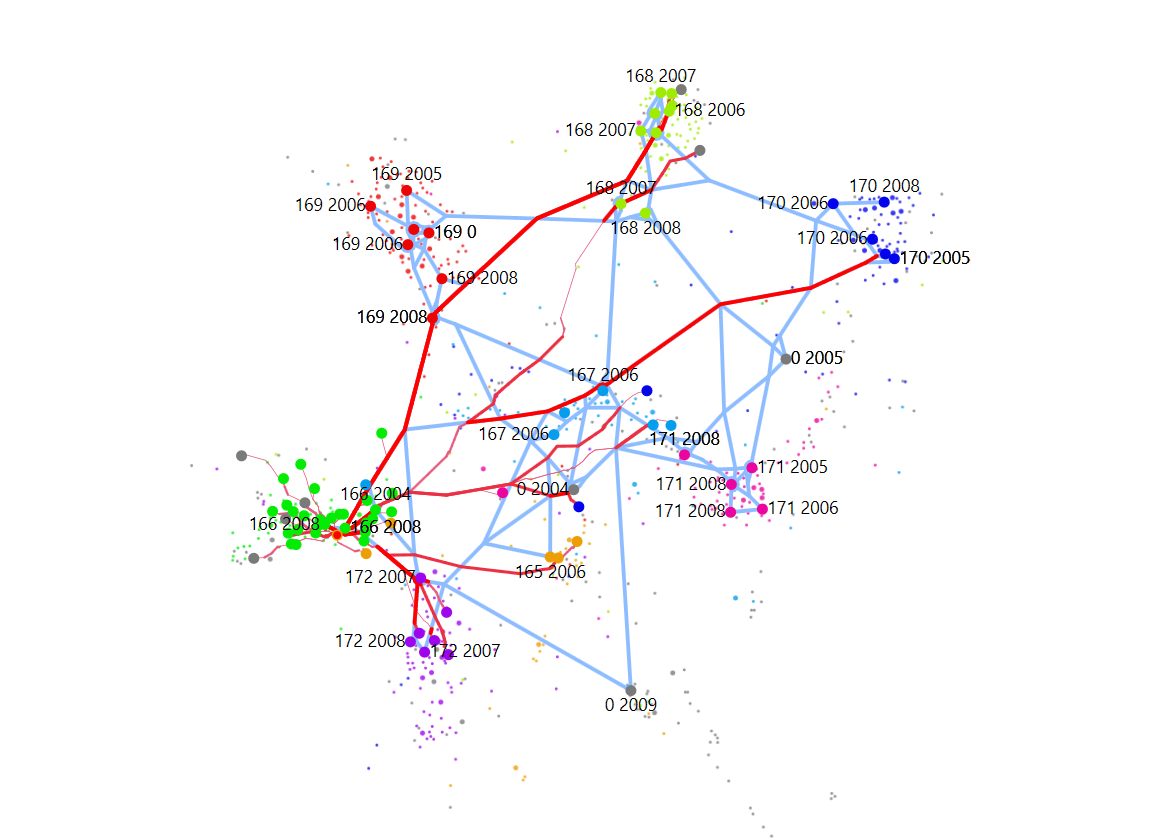}\label{fig:graphmaps:neighb}}
\caption{\label{fig:gephi} \protect\subref{fig:gephi:all}~Caltech
  graph visualized using Gephi. \protect\subref{fig:gephi:neighb},
  \protect\subref{fig:graphmaps:neighb}: showing neighbors of node
  ``165 2007'' (leftmost orange node surrounded by green nodes) using
  Gephi and our tool.}
\end{figure}

The user noted that the view in Fig.~\ref{fig:gephi:all} was too dense
and gave no insight into the graph connectivity. On the other hand, he
found our result in Fig.~\ref{fig:caltech:graphmaps} less
cluttered. User commented that, by looking at the edge routing created
by GraphMaps, one may think that two nodes are connected by an edge,
when, in fact, they are not. However, by using additional interactions
besides zoom and pan, e.g., edge highlighting, the connectivity can be
understood.

One interaction mode that the user tested for both tools was selecting
all neighbors of a node. In Gephi, when hovering the mouse over a
node, all non-incident edges and non-adjacent nodes are grayed out;
see Fig.~\ref{fig:gephi:neighb}. In our method, when clicking on a
node, routes of all its incident edges are highlighted and,
additionally, all adjacent nodes are shown, regardless of the zoom
level; see Fig.~\ref{fig:graphmaps:neighb}. According to the user,
both methods provided satisfactory results. He noted that GraphMaps, 
by using edge bundling, provides a tidier picture than Gephi.

For dense graphs, like Caltech, the user would prefer to view the
neighbors of a node in GraphMaps. The user commented that, contrary to Gephi,
GraphMaps exposes the most important nodes and their labels in a
readable fashion.
\subsubsection{Experiments with other graphs.}
In the video at {\small{\webLinkFont\httpAddr{//1drv.ms/1IsBEVh}}} we
demonstrate browsing the graph
``composers''\footnote{http://www.graphdrawing.de/contest2011/topic2-2011.html}
  with GraphMaps. The nodes of the graph represent the articles on
  Wikipedia on composers, and the edges represent Internet links
  between the articles. In the video we demonstrate the user
  interactions that help us to explore the graph. 

  When browsing the graph of InfoVis coauthors, created from ACM data,
  another user was able to notice two groups of coauthors, one
  connected to Peter Eades, and another one to Ulrik Brandes and
  Michael Goodrich. By selecting all direct neighbors of Peter Eades,
  the user was able to see that only one member of the second group,
  Roberto Tamassia, has a paper with Peter Eades; see
  Fig.~\ref{fig:groups} in the appendix. Further analysis showed that,
  according to the data set, Roberto Tamassia, is the only author with
  coauthors from both groups. GraphMaps enabled the user to gain
  insights on the graph structure.

  \subsubsection{Running time.}GraphMaps processes a graph with 1463
  nodes and 5806 edges for 1 minute, a graph with 3405 nodes and 13832
  edges for 130 seconds, and a graph with 38395 nodes and 85763 for
  less than 6 hours. The experiments were done on an HP-Z820 with Intel
  Xeon CPU E5-2690 under Windows 8.1. The required memory was 16
  GB. The current bottleneck in performance is the edge routing. We
  hope to speed up the edge routing by using parallel processing.
\subsubsection{The sources of GraphMaps.}
GraphMaps is implemented in MSAGL, which is available as Open Source
at {\small
  {\webLinkFont{github.com/Microsoft/automatic-graph-layout}}}. 
\section{Discussion}

The users of GraphMaps appreciate its
aesthetics and the similarity to browsing online maps. GraphMaps helps
in gaining the first impression of the graph structure and, in spite
of the fact that precise knowledge of the connectivity cannot be
obtained with GraphMaps by zooming and panning alone, additional
interactions allow answering the queries as, for example, finding if
two nodes are direct neighbors. A current shortcoming of GraphMaps is
that the direction of the edges is lost. It happens for other methods
as well, when edges are bundled. Solving this issue is a possible
future work item. The labeling algorithm needs improvement, since it
does not always respect the node ranking and does not always utilize free
space well enough.
\subsubsection{Future work.}
Currently we cannot guarantee that our layer generation algorithm
always reaches the end, although, in all our experiments it
did. Creating a version of the algorithm which provably stops, or,
even better, guarantees that the number of generated layers is within
predefined bounds, is a very interesting problem.

Another problem is finding a node placement working nicely with the
node ranking and suitable to assigning the nodes to the layers.
Ideally, such a layout algorithm is aware of the edge routing too, and
avoids the overlap removal step.
\subsubsection*{Acknowledgements.}
We are grateful to Roberto Sonnino for the useful discussions on the
rendering of the tile images in a background thread, and to Itzhak
Benenson for sharing with us his ideas on the visualization style.

{\small
 \bibliography{abbrv,lg}
 \bibliographystyle{titto-lncs-01}
}
\newpage
\section{Appendix}

\begin{figure}[tbh!]
\subfloat[]{\includegraphics[width = 1\linewidth]{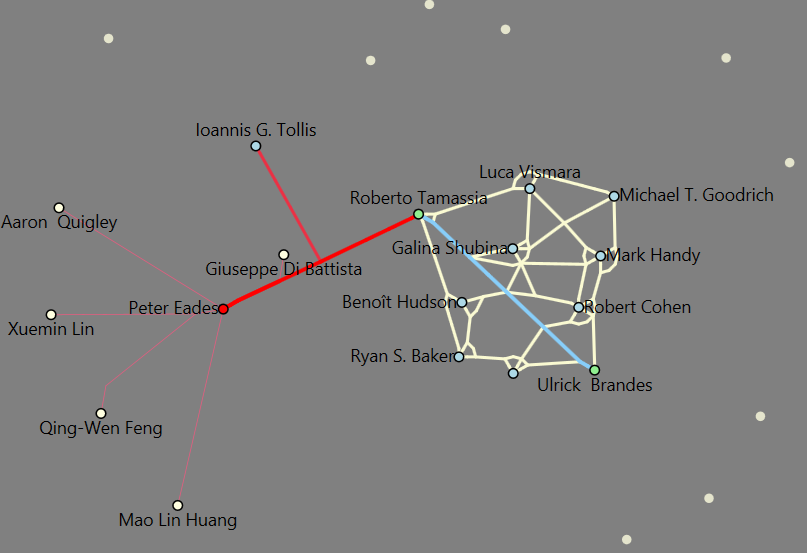}\label{fig:groups}}
\caption{We can establish that Roberto Tamassia is the only coauthor of
  Peter Eades in the right group}
\end{figure}

\subsection{Proof that quotas are satisfied}
%
%

Here we repeat the definitions from Section~\ref{sec:method-description}
for the sake of convenience. Let~$K$ be the current transformation
matrix from graph~$G$ to the user window rectangle~$W$, and let
rectangle $\viewRect = K^{-1}(W)$ denote the rectangle~$W$ mapped to
the graph coordinates. Let $B$ be the bounding box of $G$, and $w,h$
are the width and the height of a rectangle, respectively.
We then define the current zoom level $Z$ as $Z = l(P)$. Recall
that~$l(P)$ is defined as $l(P)=\min \{ \frac {w(B)}{w(P)}, \frac
{h(B)}{h(P)}\}$.
We choose $n = \max (0, { \lfloor \log_2 Z \rfloor })$ and render
elements of $L_n$ intersecting $\viewRect$. We need to prove that we
do not exceed the quotas when rendering. 

In general, let $H$ be any rectangle and let $n(H) = \max (0, {
  \lfloor \log_2 {l(H)} \rfloor })$. Let $V_H= \{v \in V, z(v) \le
l(H) \text{ and } v \text{ intersects } H\}$. Set $V_P$ represents all
the nodes that are rendered for the current view.  Let~$R$ be the set
of all rails on all layers. Let $R_H = \{r \in R: z(r) = l(H)\
\text{and } r \cap H \ne \emptyset \}$. In this notation, the set
$R_P$ represents all the rails rendered for viewport $P$.

For each rail $r \in R$ let $m(r) \in R$ denote the maximal rail of
$R$ containing it. Of course, $r=m(r)$ for a maximal rail.

  Let $R_P^{\max}=\{m(r) \in R : r \in R_P\}$. Rendered on the plain, the rails of $R^{\max}_P$
  cover the drawing of rails of $R_P$, because each rail is either maximal or
  is contained in a maximal rail.

\newcommand{\LemSmallTilesText}{%
  Algorithm~\ref{algo:node-rail-zl} sets $z$ in such a way that for any
  $n,i,j$ and~$T = T_{ij}^n$, we have $|V_T| \le Q_N/4$ and~$|R^{\max}_T| \le Q_R/4$.  }
\begin{lemma}\label{smalltiles}\LemSmallTilesText
\end{lemma}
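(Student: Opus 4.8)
The plan is to establish, by induction on $n$, the following invariant of Algorithm~\ref{algo:node-rail-zl}: \emph{as soon as the call to \ProcessLayer for level $n$ returns, every tile $T_{ij}^n$ meets at most $Q_N/4$ nodes of $L_n$ and at most $Q_R/4$ rails that are maximal among all rails of layers $\le n$.} From this the lemma follows quickly. Since $T_{ij}^n$ has width $w(B)/2^n$ and height $h(B)/2^n$, we have $l(T_{ij}^n)=2^n$; hence for $T=T_{ij}^n$ the set $V_T$ is exactly the set of nodes of $L_n$ meeting $T$, which the invariant bounds by $Q_N/4$, and --- using the rail fact below --- $R^{\max}_T$ is contained in the set of rails maximal among layers $\le n$ that meet $T$, which the invariant bounds by $Q_R/4$. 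One also uses that neither quantity ever grows again: $L_n$ is fixed once level $n$ is processed, and subsequent layers contribute only rails of zoom level $>2^n$, so the family of rails maximal among layers $\le n$ does not change afterwards. Finally, for $n$ exceeding the last level the algorithm processes, both bounds descend from the last processed level by the nesting remark below. I do not expect any of these reductions to cause trouble.

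The core of the proof, and the step I expect to be the main obstacle, is a structural claim about rails: \emph{if a rail $r$ is a proper sub-segment of a rail $r'$, then $r'$ lies on a coarser layer than $r$}, i.e.\ $z(r')<z(r)$ --- informally, across layers rails are only subdivided, never merged. The reason is that the mesh built in line~\ref{alg:generateMesh} of \ProcessLayer for level $n$ has all rails of $L_{n-1}$ among its constraint segments, and the sub-segments of those rails lying on this mesh are exactly what receives zoom level $2^n$ in line~\ref{alg:setPrevRailsUpd}; an easy induction then shows that, as point sets, the rails of $L_{n-1}$ cover every rail created at any level $<n$. Consequently, at every level each earlier rail is a union of mesh segments of the current mesh, and since distinct segments of a triangulation overlap only at endpoints, no rail is a proper sub-segment of a rail on its own layer or a finer one --- which is the claim. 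I would draw two consequences. First, a rail maximal in $R$ remains maximal once created; this justifies the ``never grows again'' remark above. Second, for a rail $r$ with $z(r)=2^n$, the maximal rail $m(r)\in R$ containing $r$ satisfies $z(m(r))\le 2^n$ and is therefore maximal among the rails of layers $\le n$ as well; since also $r\subseteq m(r)$, any tile met by $r$ is met by $m(r)$, and so $R^{\max}_T\subseteq\{\rho : \rho\text{ is maximal among rails of layers }\le n,\ \rho\cap T\ne\emptyset\}$, as claimed above.

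The last ingredient is the elementary nesting fact: each tile $T_{ij}^n$ is contained in $T_{\lfloor i/2\rfloor,\lfloor j/2\rfloor}^{n-1}$, hence, transitively, in a tile of every coarser level; so if every level-$(n-1)$ tile meets at most $q$ members of a fixed set of nodes or rails, so does every level-$n$ tile. With this the induction is routine. The base case $n=0$ is immediate: \tileMap then consists of the single tile $B$, \ProcessLayer stops before the node count exceeds $Q_N/4$, and in line~\ref{alg:candidateIf} a node and its new maximal rails are added only after checking that the rail count stays $\le Q_R/4$. For the inductive step, the \tileMap rebuilt at the finer tile resolution in line~\ref{alg:initTileMap} already satisfies both bounds on every level-$n$ tile: this follows from the nesting fact applied to the invariant for $n-1$, once one notes that at that instant the current maximal rails are precisely those maximal among layers $\le n-1$ --- equivalently among layers $\le n$, since no layer-$n$ rail exists yet. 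Each subsequent insertion is guarded by the node and rail quota checks (recall that overlap removal has already fixed every candidate node's position, and an inserted node never moves afterwards), and by the structural claim it cannot cost an earlier rail its maximality; hence the invariant still holds when \ProcessLayer for level $n$ returns. Combined with the reductions of the first paragraph, this proves the lemma.
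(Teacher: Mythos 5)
Your proof is correct and follows essentially the same route as the paper's: induction on the layer index, using the nesting of level-$n$ tiles inside level-$(n-1)$ tiles together with the node and rail quota guards in \ProcessLayer. The one genuine addition is that you isolate and justify the structural claim that rails are only ever subdivided across layers (so maximality, once attained, persists and $R^{\max}_T$ is controlled by the maximal rails of layers $\le n$), and you explicitly check that the bounds survive the processing of later layers and cover tiles at levels beyond the last one processed --- points the paper's proof uses implicitly (``we do not have new maximal rails'') without spelling them out.
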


 \begin{proof}

   For the first call to \ProcessLayer, we have~$n=0$. We prove the
   statement by induction.
   Consider Procedure \ProcessLayer for zoom level~$1 = 2^{n}$,
   $n=0$. The sets \insertedNodes and \maxRails are empty, and there
   is only one tile~$T = T_{0,0}^0$, which has the size of~$B$. After
   executing line~\ref{alg:candidateNodes}, the set \candidateNodes
   contains the first~$Q_N/4$ nodes in~$V$. Adding all candidate nodes
   to~$T$ would not break its node quota. All rails added to zoom
   level~$2^0$ are maximal rails. Due to the \emph{if} statement in
   line~\ref{alg:candidateIf}, candidate~$v$ is added to~$T$ in
   line~\ref{alg:candidateAdd} only if~$T$ would contain at
   most~$Q_R/4$ rails afterwards.

   Suppose the statement holds for~$n-1$, where $n \geq 1$, and let us prove
   it for~$n$. To do so, we first give a more detailed description of
   the $n+1$-th execution of \ProcessLayer than in
   Section~\ref{sec:method-description}. In line~\ref{alg:initTileMap}
   \insertedNodes is the set of nodes of layer~$L_{n-1}$. The set
   \maxRails is a set of rails maximal in~$R'$ and covering~$R'$
   completely, where~$R'$ is the set of rails of layers~$L_{0}$,
   \dots, $L_{n-1}$.
 We initialize \tileMap by
counting for each tile the number of nodes from \insertedNodes and the
number of rails from \maxRails intersected by it.
In line~\ref{alg:candidateNodes}, nodes in~$V \setminus
\insertedNodes$ are iteratively added to the set \candidateNodes,
as long as each tile intersects no more than~$Q_N/4$ nodes from
$V':=\insertedNodes \cup \candidateNodes$. Then, a constrained
triangular mesh~$M$ is built on all rails of $L_{n-1}$ and boundaries
of nodes from~$V'$.
In lines~\ref{alg:prevRailsUpd} and~\ref{alg:setPrevRailsUpd}, the
rails of layer~$L_{n-1}$ are added to layer~$L_{n}$, after possibly
being subdivided by~$M$.
 Further on the algorithm adds \candidateNodes and
the rails on the corresponding edges to $L_n$ as long as each
tile~$T_{i,j}^n$ intersects no more than~$Q_R/4$ maximal rails.

To get a better understanding of the algorithm, consider
Fig.~\ref{fig:abstract:insertion-1}. Here, we have $n=1$, and set
\candidateNodes is equal to $\{20,\dots,46\}$, representing all the
remaining nodes to assign. Thus, all node boundaries and all the rails
of the routes between nodes~$0, \dots, 19$ from $L_0$, see
Fig.~\ref{fig:abstract:triang-1}, are passed to the mesh
generator. Nodes~$20, \dots, 36$ are inserted. When adding node~$37$
and the corresponding rails the rail quota is exceeded, so no addition
happens. Finally, for $n=2$, the remaining nodes~$37, \dots, 46$ are
added to $L_2$.

Going back to the proof, us now prove the statement for~$n+1$.  For a
tile~$T$ of level~$n$, we have $l(T) = 2^{n}$ on the~$n+1$-th call to
\ProcessLayer. Tile~$T=T_{ij}^{n}$, by construction, is a subset of the
tile $T'=T_{i/2,j/2}^{n-1}$. By the induction hypothesis, $T'$
intersects at most~$Q_N/4$ nodes, and $|R^{\max}_{T'}| \le Q_R/4$.
  After the insertion in line~\ref{alg:initTileMap}, only
  \insertedNodes are on $L_{n}$, and~$T$ intersects at most~$Q_N/4$
  nodes of~$L_n$.

  Again, \candidateNodes contains a set of nodes which can be inserted
  without breaking node quota of any tile on level~$2^{n}$. For nodes
  $V':=$\insertedNodes$\cup$\candidateNodes we pass the bounding
  polygons of the current node size as well as rails of
  level~$2^{n-1}$ (\prevRails) to the mesh generator, which then
  creates a triangular mesh~$M$.

  In lines~\ref{alg:prevRailsUpd} and~\ref{alg:setPrevRailsUpd}, the
  set \prevRailsUpd contains the segments of~$M$ which are covered by
  rails in \prevRails and we do not have new maximal rails.
  Thus, directly after line~\ref{alg:setPrevRailsUpd}, $R^{\max}_{T}
  \subset R^{\max}_{T'}$ and contains at most~$Q_R/4$ elements.

  Similarly to level~1, the algorithm attempts to add \candidateNodes
  and maximal rails covering the corresponding edges to \tileMap, as long as each
  tile~$T_{i,j}^{n}$ contains no more than~$Q_R/4$ maximal rails. Because of
  the guarding \emph{if} statement in line~\ref{alg:candidateIf} we
  will not overfill the quotas for $T$ after inserting each candidate
  node~$v$ and \maxRailsV in line~\ref{alg:candidateAdd}. Thus, the
  quotas hold for~$T$ on level~$2^{n}$. \hfill$\square$

 \end{proof}

\newcommand{\ThNodeLevelsCorrectText}{On termination,
  Algorithm~\ref{algo:node-rail-zl} defines the layers in such a way
  that $|V_P| \le Q_N$ and $|R_P^{\max}| \le Q_R$.}
\begin{theorem}
\label{th:nodecorrect}
\ThNodeLevelsCorrectText
\end{theorem}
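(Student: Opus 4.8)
The plan is to deduce the theorem from the per-tile bound of Lemma~\ref{smalltiles} by a short covering argument. Let $n = \max(0, \lfloor \log_2 Z \rfloor)$ be the layer index chosen for rendering, so that $V_P$ is the set of nodes of $L_n$ meeting $P$, $R_P$ the set of rails of $L_n$ meeting $P$, and $R_P^{\max} = \{m(r) : r \in R_P\}$. (For $Z \ge 1$ the condition $z(v) \le l(P)$ used in the definition of $V_P$ is equivalent to $z(v) \le 2^{n}$, since all zoom levels are powers of two; when $Z < 1$ we simply have $n = 0$.)

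The main step is to bound the number of level-$n$ tiles that $P$ can meet. Since every node and every rail is contained in the bounding box $B$, only $P \cap B$ matters, and I claim $w(P \cap B) \le w_n$ and $h(P \cap B) \le h_n$. If $Z \ge 1$, then $2^n \le Z = l(P) \le w(B)/w(P)$, so $w(P) \le w/2^n = w_n$, and symmetrically $h(P) \le h_n$; if $Z < 1$, then $n = 0$ and $P \cap B \subseteq B$, which has width $w = w_0$ and height $h = h_0$. The tiles $\{T_{ij}^n\}_{i,j}$ form a regular grid whose cells have width $w_n$ and height $h_n$, so (fixing a convention for ties on cell boundaries, e.g.\ taking the tiles half-open) an axis-aligned rectangle of width at most $w_n$ and height at most $h_n$ meets at most two columns and at most two rows, hence at most four tiles of level $n$. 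Call these tiles $T^{(1)}, \dots, T^{(s)}$ with $s \le 4$.

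It remains to push each rendered entity into one of these tiles. Every node $v \in V_P$ lies in $B$ and meets $P$, hence meets $P \cap B$, hence meets some $T^{(t)}$; since $v$ also belongs to $L_n$, we get $v \in V_{T^{(t)}}$, so $V_P \subseteq \bigcup_t V_{T^{(t)}}$. Likewise, for $r \in R_P$ we have $z(r) = 2^n$ and $r \subseteq B$ meets $P$, so $r$ meets $P \cap B$ and therefore some $T^{(t)}$, which has level $n$; thus $r \in R_{T^{(t)}}$ and $m(r) \in R_{T^{(t)}}^{\max}$, giving $R_P^{\max} \subseteq \bigcup_t R_{T^{(t)}}^{\max}$. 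Applying Lemma~\ref{smalltiles} to each of the at most four tiles, $|V_P| \le \sum_t |V_{T^{(t)}}| \le 4 \cdot (Q_N/4) = Q_N$ and $|R_P^{\max}| \le \sum_t |R_{T^{(t)}}^{\max}| \le 4 \cdot (Q_R/4) = Q_R$.

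The routine part is the arithmetic relating $Z$, $n$, and the tile dimensions. The points that need care are the geometric claim that the viewport meets at most four tiles of its level — in particular the tie-breaking convention on tile boundaries and the case in which the viewport is larger than $B$ — and the minor bookkeeping reconciling the continuous threshold $l(P)$ in the definitions of $V_P$, $R_P$ with the discrete layer $L_n$ that is actually drawn; both are resolved by the power-of-two structure of the zoom levels. Given Lemma~\ref{smalltiles}, everything else is the factor-of-four count.
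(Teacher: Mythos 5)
Your proof is correct and follows essentially the same route as the paper's: derive $w(P)\le w(B)/2^n$ and $h(P)\le h(B)/2^n$ from $2^n\le l(P)$, conclude that $P$ meets at most four level-$n$ tiles, and multiply the per-tile bounds of Lemma~\ref{smalltiles} by four. You are somewhat more careful than the paper about the $Z<1$ case, the tie-breaking on tile boundaries, and the explicit covering of $V_P$ and $R_P^{\max}$ by the tiles' entity sets, but these are refinements of the same argument rather than a different approach.
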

\begin{proof}
  Let $n = n(P) = \max \{ 0, \lfloor \log_2 l(P) \rfloor \}$.  By
  Lemma~\ref{algo:node-rail-zl}, each tile of $L_n$ intersects not
  more than $Q_N/4$ nodes of $L_n$, and not more than $Q_R/4$ of
  maximal rails from $\cup_{i\le n} L_i$. The tile's dimensions in
  $L_n$ are $w(B)/2^n$ and $h(B)/2^n$. If we prove that $P$ intersects
  not more than four tiles of $L_n$, then the theorem follows. For
  $n=0$ we have only one tile on $L_0$, and the theorem holds.

  Let $n > 0$. We show that $P$ ``is not greater
  than'' a tile of $L_n$, that is, $w(P) \le w(B)/2^n$ and $h(P) \le
  h(B)/2^n$, and hence, $P$ cannot intersect more than four tiles of
  $L_n$.

  By the definition of~$l(P)$, we have $l(P) \leq w(B)/w(P)$ and~$l(P)
  \leq h(B)/h(P)$.  Let us prove that $w(P) \le w(B)/2^n$. The proof
  that $h(P) \le h(B)/2^n$ is similar. Since $n > 0$, we have $n=
  \lfloor \log_2 l(P) \rfloor \le \log_2 l(P)$.  Therefore, $2^n \le
  l(P) \le w(B)/w(P)$, and $w(P) \le w(B) / 2^n$. The theorem follows.
\end{proof}

\subsection{Node labeling}
\label{app:sec:node-labeling}

We render nodes as circles and assign them text labels, both of
constant height on the screen; see Fig.~\ref{fig:b100}, right. To
guarantee readability, we aim to avoid overlaps of labels with nodes
and with each other. For a given view only a subset of visible nodes
might have visible labels.
Our approach makes use of the order of nodes in~$V$ and greedily tries
to label important nodes first.
%
Each node~$v \in V$ is assigned a value~$\ell(v) \ge z(v)$ and one
possible label position $p(v)$: left, right, above or below the node
circle. The label of $v$ is visible if and only if the current viewport
intersects the label and if it holds $Z \ge l(v)$, where
$Z$ is the current zoom level as defined in Section~\ref{sec:method-description}.

%
We iteratively consider zoom levels~$Z_i = \delta_0 \cdot
\delta^i$ for $i=0,1,2,\dots$ and some $\delta_0 > 0$, $\delta >
1$, e.g.,~$\delta_0 = 2^{-4}$ and~$\delta=2^{1/8}$. For fixed~$Z_i$,
we insert all nodes~$v$ with~$z(v) \leq Z_i$ as well as all assigned
labels into an R-tree with sizes corresponding to~$Z_i$. Next, for yet
unlabeled nodes $v$ with~$z(v) \leq Z_i$, we set~$\ell(v)=Z_i$
iteratively according to the order of nodes in~$V$, as long as no
label overlaps arise.
%
As the zoom level~$Z_i$ grows the bounding boxes of the labels and the
nodes shrink, so for some, maybe large, $i$ it is possible to find
$p(v)$ such that the label's bounding box does not overlap already
processed nodes and labels. Then, we can insert the label into the
tree. When looking for $p(v)$ we try to insert the label left, right,
above or below the node circle in this order.

This simple greedy approach produces good results in our
experiments. We plan to investigate the possibility to adapt advanced
dynamic node labeling algorithms, e.g.~\cite{bnpw-oarcd-10}, in
GraphMaps.

\subsection{Pre-rendered tiles.}
\label{app:sec:tiles}

For a tile~$T_{ij}^n$ we consider the set $t_{ij}^n = \{v \in V
\setminus L_n: \text{center}(v) \in T_{ij}^n\}$. If the number of
elements of $t_{ij}^n$ is greater than a specific threshold, 60 in our
setting, then the nodes of $t_{ij}^n$ are rendered transparently into
an image.

 GraphMaps generates tile images efficiently by using a
recursion and avoiding processing each tile of each layer. Even for a
graph with about $40000$ nodes the tile generation takes less than a
minute.

The images are loaded and unloaded dynamically by using a background
thread to keep the visualization responsive. If, while browsing, a
tile image is not found on the disk for tile $T_{ij}^n$, then we
create vector graphics elements for each node of $t_{ij}^n$. GraphMaps
works in such a way that not more than four tiles are loaded at a
time. Therefore, the number of such vector elements loaded at any
moment is not greater than $160$ and they do not hinder the
visualization.

\end{document}